\documentclass[conference,a4paper]{IEEEtran}


\usepackage{amssymb, amsmath, color}
\usepackage{amsthm,graphicx,enumerate,verbatim,xcolor}
\usepackage[small]{caption}
\usepackage{subfigure}
\usepackage{tikz}

\def\squarebox#1{\hbox to #1{\hfill\vbox to #1{\vfill}}}

\newcommand{\inout}         {{- \hspace{-0.3mm} \vbox{\hrule\hbox{\vrule\squarebox{.667em}\vrule}\hrule}\hspace{-0.3mm} -}}

\DeclareMathOperator*{\argmin}{arg\,min}

\newtheorem{thm}{Theorem}

\newtheorem{lem}{Lemma}
\newtheorem{defn}{Definition}

\begin{document}

\sloppy

\title{A Bit of Secrecy for Gaussian Source Compression}


\author{Eva C. Song \qquad Paul Cuff \qquad H. Vincent  Poor\\ Dept. of Electrical Eng., Princeton University,  NJ 08544\\ \{csong, cuff, poor\}@princeton.edu}


\maketitle

\begin{abstract}
In this paper, the compression of an independent and identically distributed Gaussian source sequence is studied in an unsecure network.  Within a game theoretic setting for a three-party noiseless communication network (sender Alice, legitimate receiver Bob, and eavesdropper Eve), the problem of how to efficiently compress a Gaussian source with limited secret key in order to guarantee that Bob can reconstruct with high fidelity while preventing Eve from estimating an accurate reconstruction is investigated. It is assumed that Alice and Bob share a secret key with limited rate. Three scenarios are studied, in which the eavesdropper ranges from weak to strong in terms of the causal side information she has. It is shown that one bit of secret key per source symbol is enough to achieve perfect secrecy performance in the Gaussian squared error setting, and the information theoretic region is not optimized by joint Gaussian random variables.
\end{abstract}

\section{Introduction}
This work specializes general results in the literature for secure source compression found in \cite{allerton-cuff}, \cite{globecom-cuff} and \cite{schieler-isit} to the important case of a Gaussian source and squared-error distortion. We analytically optimize the information-theoretic secrecy regions in only certain regimes, but in doing so we draw out some interesting observations: No more than 1 bit/symbol of secret key is ever needed for security in the distortion sense; and jointly-Gaussian selection of variables is sub-optimal.

Since the entropy of a Gaussian random variable is infinite, it would require an infinite amount of bits to losslessly represent this variable. Traditional rate distortion theory introduced by Shannon \cite{shannon-rd} opens the door for lossy compression, which characterizes how to effectively compress a source within certain tolerance for distortion in the setting of point-to-point communication. Secrecy in an unsecure network with potential eavesdroppers has been studied in the past. Such work generally uses equivocation as a metric for secrecy, which measures how much uncertainty about the source is reduced given the eavesdropper's observation. Perfect secrecy requires almost no reduction in the uncertainty. Equivocation is an important tool for establishing perfect secrecy in physical layer security. Wyner \cite{wyner} and Csisz\'{a}r and K\"{o}rner \cite{ck} show how secure transmission can be made possible over a noisy broadcast channel without using any secret key, and they also use equivocation as a metric of partial secrecy when perfect secrecy is not achieved. However, when dealing with source coding in noiseless channels (a more common assumption in cryptography), equivocation no longer seems to capture the essence of the problem because it does not tell us how Eve can make sense out of her observation in estimating the source. It is well known, according to Shannon \cite{shannon-secrecy}, that perfect secrecy would require the same amount of key bits as information bits. In particular, for a source with high or infinite entropy, e.g. a Gaussian source, where perfect secrecy is too expensive to obtain, how much useful information must be revealed? Can we leak information to Eve that is less meaningful (e.g. the high-precision bits)?

While perfect secrecy remains a luxury for secure communication, Yamamoto \cite{yamamoto} takes an alternative approach by measuring the secrecy with distortion given that the eavesdropper is forced to reconstruct the source. For example, suppose Alice is a cable television provider and Bob is a customer of Alice that pays a service fee. Eve is a cable thief who does not pay Alice and wants to watch the same programs available to Bob by intercepting the communication. Alice encodes the programs with some secret key she shares with Bob so that Bob can decode and watch the programs. Even when the key is limited, Alice can still choose to encode the important part (e.g. the video) of the programs. Consequently, the signal Eve gets may be correlated with the programs but she cannot watch them even though perfect secrey is compromised. More recent work by Cuff \cite{allerton-cuff} established a general model for looking at secrecy as a maxmin problem, in which the payoff function typically captures two major criteria: reliability (i.e. Bob's reconstruction of the source is close enough to the original) and secrecy (i.e. Eve's reconstruction of the source is far away enough from the original). The special case when lossless compression is required by Bob was studied in \cite{globecom-cuff} in which the payoff simplifies to distortion between Alice and Eve. The interesting case in which the distortion measure is hamming distance and Eve has causal information about the source realization was solved in \cite{allerton-schieler}.

In this paper, we investigate the model proposed in \cite{allerton-cuff} for an independent and identically distributed (i.i.d.) Gaussian source sequence. Lossy compression between Alice and Bob is considered. The causal information available to Eve in estimating the source may differ depending on the application. When the source sequence needs to be decoded all at once, no causal information (past moves of Alice and/or Bob) should be assumed at Eve's decoder. On the other hand, when the source is decoded in sequence, it is more appropriate to assume that Eve gets to see the past moves. The assumption of causal information disclosure makes this distortion notion of secrecy a more robust metric and equivocation becomes a special case of this problem by choosing the distortion function properly \cite{cuff-ita}, \cite{schieler-journal}. Different scenarios regarding the causal information availability are considered which require solving several optimization problems. Unlike the lossless case in \cite{globecom-cuff} where it reduces to a linear programming problem, the constraint region for lossy compression is much more complicated. Several schemes are discussed, some of which achieve optimality under certain conditions. A scalar quantization special case is addressed and numerical result is given for all cases.


\section{Problem Setup}

Let us consider the network in Fig. \ref{network}. The sender Alice has an i.i.d. Gaussian source sequence $X^n$, where each $X_i\sim \mathcal{N}(\mu_0, \sigma_0^2)$, which we denote by $p_0$. Alice compresses $X^n$ at rate $R$ and sends it to the legitimate receiver Bob through a noiseless rate-limited channel. In addition to the channel, Alice and Bob can make use of a shared secret key $K$ during the communication. An eavesdropper Eve intercepts the communication and observes the message $M\in [1:2^{nR}]$, which will have been at least partially encrypted using $K$. Both Bob and Eve are trying to reproduce the source sequence, i.e. at each time $i$, Bob estimates the current symbol $X_i$ based on the encrypted message $M$ and some side information (SI) $S^{B}_i$ while Eve does the same based on her own SI $S^{E}_i$. 

Specifically, we have the secret key
$K\sim Unif[1:2^{nR_s}]$, the encoder
$f_{n}: \mathcal{X}^n \times \mathcal{K} \mapsto \mathcal{M}$, and the decoder for Bob
$\{g_i: \mathcal{M} \times \mathcal{S}^{B}_i \mapsto \mathcal{Y}\}_{i=1}^{n}$. Similarly, Eve has her  decoder $\{t_i: \mathcal{M} \times \mathcal{S}^{E}_i \mapsto \mathcal{Z}\}_{i=1}^{n}$. More generally, $f_n$, $\{g_i\}_{i=1}^n$, and $\{t_i\}_{i=1}^n$ are stochastic functions of $p(m|x^n, k)$, $\{p(y_i|m,s^{B}_i)\}_{i=1}^n$, and $\{p(z_i|m,s^{E}_i)\}_{i=1}^n$, respectively.

\begin{figure}[htbp]
  \centering
  \includegraphics[width=8 cm]{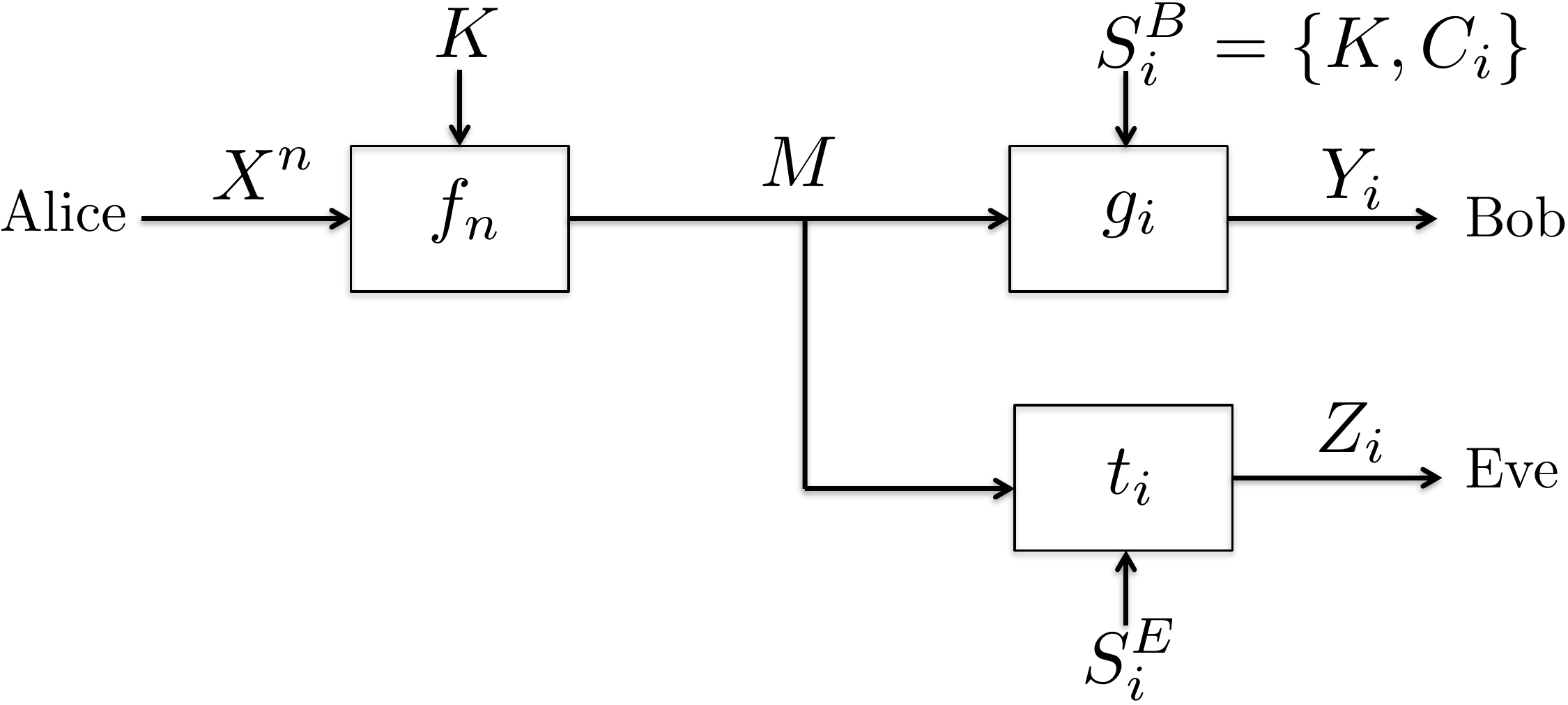}
\caption{The encoder encodes the whole source sequence $X^n$ with secret key $K$. Decoders estimate one symbol at each time instance $i$. }
\label{network}
\end{figure}

To measure the fidelity of Bob's reconstruction and the secrecy of the source with repect to Eve, we define the following payoff function:
\begin{eqnarray}
\pi(x,y,z) &\triangleq& \frac{1}{\sigma_0^2}[(z-x)^2-(y-x)^2]. \label{payoff}
\end{eqnarray}
Note that $(\ref{payoff})$ compares the squared error distortions of Bob and Eve. If the distance between Eve's symbol $z$ and the original source symbol $x$ is greater than that of Bob's symbol $y$ and $x$, we have a positive payoff; otherwise, we get a negative payoff. The payoff function is normalized so that the result does not depend on the variance of the source. The payoff of sequences is defined as the average of per letter payoff:
$$\pi(x^n,y^n,z^n)\triangleq \frac{1}{n}\sum_{i=1}^n \pi(x_i,y_i,z_i).$$
\begin{defn}
A secrecy rate-payoff triple $(R, R_s, \Pi)$ is achievable if $K\in[1:2^{nR_s}]$, $M\in[1:2^{nR}]$, and 
$$\lim_{n\rightarrow \infty}\sup_{\{f_n, \{g_i\}_{i=1}^n\}}\inf_{\{t_i\}_{i=1}^n}\mathbb{E}\pi(X^n, Y^n, Z^n)\geq\Pi.$$
\end{defn}

In this paper, we will consider the following three scenarios of SI and evaluate the corresponding secrecy rate-payoff regions. 
\subsection{Weak Eavesdropper}
\begin{eqnarray*}
s^{E}_i&=&\{z^{i-1}\}
\end{eqnarray*}
\subsection{Causal Source Awareness}
\begin{eqnarray*}
s^{E}_i&=&\{x^{i-1}, z^{i-1}\}
\end{eqnarray*}
This corresponds to the scenario in which a stronger eavesdropper gets to see the past realization of the source symbols when estimating the current symbol.
\subsection{Causal General Awareness}
\begin{eqnarray*}
s^{E}_i&=&\{x^{i-1}, y^{i-1}, z^{i-1}\}
\end{eqnarray*}
Here all causal information is available to Eve for reconstructing the Gaussian source sequence.

In all three cases, the SI of Bob is given by $k$ and $c_i=\{x^{i-1}, y^{i-1}, z^{i-1}\}$. It was shown in \cite{allerton-cuff} that it is sufficient to consider \textbf{only} $s^{B}_i=\{k\}$ which is independent of Bob's awareness of the eavesdropper. 

\section{Secrecy Rate-Payoff Regions}
\begin{thm}
The secrecy rate-payoff triple $(R, R_s, \Pi)$ for a \textbf{weak eavesdropper} is achievable for an i.i.d. Gaussian source if and only if
$$R_s>0$$
and
$$\Pi \leq 1-\exp(-2R).$$
\end{thm}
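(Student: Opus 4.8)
The plan is to prove the upper bound and the strict positivity of $R_s$ by a direct converse, and then to obtain achievability by specializing the general inner bound of \cite{allerton-cuff} to the Gaussian test channel. For the converse I would first note that Eve can always guarantee a normalized distortion of at most $1$ by discarding her observation and outputting the prior mean $\mu_0$ at every time $i$, since $\frac{1}{\sigma_0^2}\mathbb{E}(\mu_0-X_i)^2=1$; hence for every encoder and Bob-decoder, $\inf_{\{t_i\}}\frac{1}{n\sigma_0^2}\sum_i\mathbb{E}(Z_i-X_i)^2\leq 1$. Independently, I would lower-bound Bob's distortion by the Gaussian rate--distortion function, the key point being that the key is independent of the source and therefore conveys nothing about $X^n$. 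Writing Bob's reconstruction $Y^n$ as a (stochastic) function of $(M,K)$ and using $I(X^n;K)=0$, one gets $nR\geq H(M)\geq I(X^n;M\mid K)=I(X^n;M,K)\geq I(X^n;Y^n)\geq\sum_i I(X_i;Y_i)\geq\sum_i \frac12\log\frac{\sigma_0^2}{D_i}$ with $D_i=\mathbb{E}(X_i-Y_i)^2$, and convexity of $R(D)$ then forces $\frac{1}{n\sigma_0^2}\sum_i\mathbb{E}(Y_i-X_i)^2\geq e^{-2R}$. Subtracting the two bounds yields $\Pi\leq 1-e^{-2R}$. To see that $R_s>0$ is necessary, observe that when $R_s=0$ there is no key and Bob's side information is empty, so Eve, who also sees $M$, can run Bob's own decoder on $M$ and match his reconstruction exactly, forcing $\Pi\leq 0<1-e^{-2R}$ for every $R>0$.

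For achievability I would choose the reconstruction test channel that meets the Gaussian rate--distortion function at rate $R$: let $Y$ be jointly Gaussian with $X$ so that $I(X;Y)=R$ and $\frac{1}{\sigma_0^2}\mathbb{E}(Y-X)^2=e^{-2R}$, and have Bob reconstruct with this fidelity through a likelihood encoder over an i.i.d.\ codebook of rate $R$. On Eve's side the goal is to drive her best response to the constant $\mu_0$ by making the transmitted index reveal essentially nothing about $X^n$. The mechanism is the soft-covering lemma: because the \textbf{weak} eavesdropper possesses no side information correlated with the source (her $s^E_i=\{z^{i-1}\}$ is itself generated from $M$), it suffices that the induced law of $M$ on the codebook be asymptotically independent of $X^n$ in total variation, and a key of any positive rate $R_s>0$ scrambles the residual dependence that soft covering leaves behind. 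Combining Bob's normalized distortion $e^{-2R}$ with Eve's forced normalized distortion $1$ then gives $\Pi=1-e^{-2R}$, matching the converse.

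The main obstacle is the secrecy half of the achievability: proving that a vanishingly small key genuinely collapses Eve's best response to the prior mean. This requires a careful soft-covering argument showing that, under the likelihood encoder, the joint law of $(X^n,M)$ after encryption is close in total variation to the product of the source law and a message law independent of $X^n$, so that Eve's conditional estimate of $X_i$ given her observation reduces to $\mu_0$ and her causal outputs $z^{i-1}$ grant no additional leverage. Verifying that the general key-rate condition of \cite{allerton-cuff}, evaluated at this Gaussian test channel, indeed permits $R_s\downarrow 0$ is the delicate quantitative step; by comparison the rate--distortion converse and the mean-guessing bound are routine.
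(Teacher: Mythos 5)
Your converse half is sound, and is in fact more explicit than anything in the paper: the mean-guessing bound on Eve's normalized distortion, the chain $nR \geq H(M) \geq I(X^n;M\mid K) = I(X^n;M,K) \geq I(X^n;Y^n) \geq \sum_i I(X_i;Y_i)$ combined with convexity of the Gaussian rate--distortion function, and the simulate-Bob's-decoder argument for the necessity of $R_s>0$ are all correct. (The paper itself disposes of the entire theorem by citing \cite{schieler-isit} --- any strictly positive key rate forces Eve to maximum average distortion when Bob's reconstruction is lossless --- and then plugging in the Gaussian distortion-rate function $d(R)=\sigma_0^2\exp(-2R)$.)

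The genuine gap is in your achievability mechanism. You claim that soft covering plus a key of arbitrarily small positive rate makes the joint law of $(X^n,M)$ asymptotically close in total variation to a product law, so that Eve's posterior collapses to the prior and her best response becomes $\mu_0$. This is information-theoretically impossible when $R_s < R$: since Bob must attain distortion near $\sigma_0^2 e^{-2R}$ from $(M,K)$ and $K$ is independent of $X^n$, we have $I(X^n;M) = I(X^n;M,K) - I(X^n;K\mid M) \geq I(X^n;Y^n) - H(K) \geq n(R - R_s) - o(n)$, which is bounded away from zero per symbol; near-independence in total variation would contradict this (it is exactly Shannon's perfect-secrecy requirement, needing key rate comparable to the message rate). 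The actual mechanism of \cite{schieler-isit} --- and the whole point of ``secrecy is cheap'' --- is different: Eve is allowed to learn almost everything about $X^n$ in the mutual-information sense, but the key is used so that, given $M$, the source sequence is essentially uniform over a list of $2^{nR_s}$ candidate sequences that behave like independent draws from $p_0^n$; for any single estimate $z^n$, the distortion averaged over such an exponentially large, well-spread list concentrates at $\min_z \mathbb{E}(X-z)^2 = \sigma_0^2$. In other words, the secrecy guarantee is distortion-based (list-type), not total-variation-based, and your proof as written cannot be repaired by ``a more careful soft-covering argument''; it needs this qualitatively different argument, or simply the citation, which is how the paper proceeds.
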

This is an application of \cite{schieler-isit}, in which it is shown that for lossless compression between Alice and Bob, any strictly positive key rate will guarantee the maximum average distortion between Alice and Eve. We then apply the distortion-rate function for Gaussian source $d(R)=\sigma_0^2 \exp(-2R)$ to obtain this result. Note that because of the way we define the payoff function in $(\ref{payoff})$, the result does not depend on the mean or the variance of the source. 

Notice that the payoff achieved for any positive $R_s$ is as good as what would be achieved under perfect secrecy.

With \textbf{causal source awareness}, the general form of the optimal payoff function was characterized in \cite{allerton-cuff} as 
\begin{eqnarray}
\Pi_{p_0}(R, R_s)&=&\max_{p(y,u|x)\in\mathcal{P}} \min_{z(u)} \mathbb{E} \pi(X,Y,z(U)) \label{paycurve} \\
\mathcal{P}&=&\left\{ 
     \begin{array}{lr}
       p(y,u|x):\\
       R_s\geq I(X;Y|U)\\
       R\geq I(X;U,Y)
     \end{array}
   \right\}, \nonumber
\end{eqnarray}
where $\Pi_{p_0}(R,R_s)$ is the maximum $\Pi$ that is achievable with rates $R$ and $R_s$. A proof of this payoff function can be found in \cite{schieler-journal}. To specialize this result to the Gaussian case, we must optimize over the choice of distribution $p(y,u|x)$. It is often the case for Gaussian problems that the choice of jointly Gaussian auxiliary random variables is optimal. However, because Bob and Eve are playing the game competitively instead of collaboratively, it turns out that a jointly Gaussian distribution does as poorly as if no auxiliary random variable is used.  Below we analyze the best payoff that can be achieved with jointly Gaussian distributions and later show that other non-Gaussian choices of $p(y,u|x)$ can do better.

First, we observe the following from $(\ref{payoff})$ and $(\ref{paycurve})$:
\begin{eqnarray}
&&\Pi_{p_0}(R, R_s)\nonumber\\
&=&\max_{p(y,u|x)\in\mathcal{P}} \min_{z(u)} \frac{1}{\sigma_0^2}\mathbb{E}[(z(U)-X)^2-(Y-X)^2] \nonumber\\
&=& \frac{1}{\sigma_0^2}\max_{p(y,u|x)\in\mathcal{P}}[\sum_{x,u}p(u|x)p_0(x)(x-\mathbb{E}[X|U=u])^2 \nonumber\\
&\ \ &-\sum_{x,y}p(y|x)p_0(x)(y-x)^2]  \label{mmse}
\end{eqnarray}
where $(\ref{mmse})$ comes from the fact that the conditional mean is the minimum mean squared error (MMSE) estimator. Here we are using $\sum$ and $\int$ interchangebly for convenience because it is not clear whether $Y$ and $U$ are discrete or not at this stage.

\subsection{Jointly Gaussian}
\begin{thm} \label{causal}
The solution to $(\ref{paycurve})$ for a Gaussian source distribution $p_0$ is
$$\Pi_{p_0}(R,R_s)=1-\exp(-2\min(R_s,R))$$
when $p(x,y,u)$ is constrained to be a jointly Gaussian distribution.
\end{thm}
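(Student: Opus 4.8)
The plan is to recast the optimization $(\ref{mmse})$ entirely in terms of conditional variances, exploiting that everything is jointly Gaussian. First I would normalize, using the fact noted after Theorem 1 that the payoff does not depend on $\mu_0$ or $\sigma_0^2$, and simply take $X\sim\mathcal{N}(0,1)$. Writing $D_E \triangleq \mathrm{Var}(X|U)$ for Eve's minimum mean squared error (the inner minimization in $(\ref{paycurve})$ is attained by $z(U)=\mathbb{E}[X|U]$) and $D_B \triangleq \mathbb{E}[(X-Y)^2]$ for Bob's distortion, the objective becomes simply $D_E - D_B$. Because $(X,U,Y)$ is jointly Gaussian with $X$ scalar, I would then use $I(X;U)=\tfrac12\log\frac{1}{D_E}$ and $I(X;U,Y)=\tfrac12\log\frac{1}{D_{UY}}$, where $D_{UY}\triangleq\mathrm{Var}(X|U,Y)$, to turn the two rate constraints in $\mathcal{P}$ into $D_{UY}\ge e^{-2R}$ and, via $I(X;Y|U)=I(X;U,Y)-I(X;U)=\tfrac12\log\frac{D_E}{D_{UY}}$, into $D_E \le D_{UY}\,e^{2R_s}$. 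These identities hold for $U,Y$ of arbitrary (jointly Gaussian) dimension, so no generality is lost by allowing vector auxiliaries.

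For the converse I would use the elementary orthogonality bound $D_B=\mathbb{E}[(X-Y)^2]\ge \mathrm{Var}(X|U,Y)=D_{UY}$, which follows from decomposing $X=\mathbb{E}[X|U,Y]+W$ with $W$ orthogonal to $(U,Y)$. Hence the payoff is at most $D_E-D_{UY}$, and it remains to maximize $D_E-D_{UY}$ over the region $D_{UY}\ge e^{-2R}$, $\;D_E\le\min(1,\,D_{UY}e^{2R_s})$ (the cap $D_E\le 1$ being just $\mathrm{Var}(X|U)\le\mathrm{Var}(X)$). This is a two-variable problem that I would solve by inspection: on $D_{UY}\le e^{-2R_s}$ the objective equals $D_{UY}(e^{2R_s}-1)$ and increases, while on $D_{UY}> e^{-2R_s}$ it equals $1-D_{UY}$ and decreases, so the unconstrained optimum sits at $D_{UY}=e^{-2R_s}$ with value $1-e^{-2R_s}$. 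Intersecting with $D_{UY}\ge e^{-2R}$ gives two regimes: if $R_s\le R$ the optimum $e^{-2R_s}$ is feasible and the value is $1-e^{-2R_s}$; if $R_s> R$ the whole feasible set lies in the decreasing branch and the maximum is at $D_{UY}=e^{-2R}$ with value $1-e^{-2R}$. Both cases combine to $1-e^{-2\min(R_s,R)}$.

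For achievability I would exhibit a jointly Gaussian triple meeting this bound, which conveniently also substantiates the paper's remark that a Gaussian $U$ is useless. Taking $U$ independent of $(X,Y)$ forces $D_E=\mathrm{Var}(X)=1$, and choosing the backward test channel $X=Y+W_B$ with $W_B$ independent of $Y$ and $\mathbb{E}[W_B^2]=e^{-2\min(R_s,R)}$ makes $Y$ the conditional mean, so that $D_B=D_{UY}=\mathbb{E}[W_B^2]$ and $I(X;Y)=I(X;U,Y)=I(X;Y|U)=\min(R_s,R)$, which satisfies both constraints with at least one tight. The resulting payoff is exactly $1-e^{-2\min(R_s,R)}$. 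Equivalently one may keep the Markov structure $X-Y-U$ with a genuinely correlated but degraded $U$; the optimization above shows it can do no better.

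The step I expect to be the real crux is the converse optimization, specifically recognizing that decreasing $D_B$ to help Bob simultaneously tightens the feasible ceiling on $D_E$ through the shared constraint $D_E\le D_{UY}e^{2R_s}$, so that in the key-limited regime $R_s<R$ it is actually optimal to let Bob reconstruct \emph{coarsely} (large $D_{UY}$) in order to keep Eve maximally ignorant ($D_E=1$). The clean reduction of $(\ref{mmse})$ to the two scalar quantities $D_E$ and $D_{UY}$, together with the bound $D_B\ge D_{UY}$, is what makes this competitive trade-off transparent; the remaining case analysis is routine.
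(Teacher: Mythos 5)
Your proposal is correct, and its converse takes a genuinely different route from the paper's. The paper fixes scalar jointly Gaussian $(X,Y,U)$, parametrizes everything by the three correlations $\rho_{xy},\rho_{xu},\rho_{yu}$, reduces the objective to $g=\rho_{xy}^2-\rho_{xu}^2$, and then asserts---``with some algebraic manipulation''---that dropping the rate constraint yields $g\le 1-\exp(-2R_s)$ while dropping the key constraint yields $g\le 1-\exp(-2R)$, combining the two bounds at the end. You instead change variables to the conditional variances $D_E=\mathrm{Var}(X|U)$ and $D_{UY}=\mathrm{Var}(X|U,Y)$, use the Gaussian identities $I(X;U,Y)=\tfrac12\log\tfrac{1}{D_{UY}}$ and $I(X;Y|U)=\tfrac12\log\tfrac{D_E}{D_{UY}}$ together with the orthogonality bound $\mathbb{E}[(X-Y)^2]\ge D_{UY}$, and then solve the resulting two-variable optimization explicitly, keeping both constraints active simultaneously rather than relaxing one at a time. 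This buys three things: the crux algebra the paper leaves unshown is actually carried out, and becomes transparent (a piecewise-monotone objective in $D_{UY}$ with a single breakpoint at $e^{-2R_s}$); the argument covers vector-valued Gaussian auxiliaries, which the paper's three-correlation parametrization implicitly excludes; and it exposes the competitive trade-off that makes the theorem interesting (shrinking Bob's distortion tightens the ceiling $D_E\le D_{UY}e^{2R_s}$ on Eve's). Two minor remarks: your bound $D_B\ge D_{UY}$ is pointwise weaker than the paper's implicit use of $D_B\ge \mathrm{Var}(X|Y)$, but it is still tight against the same achievability scheme, so nothing is lost; and your achievability construction ($U$ independent of $(X,Y)$, backward test channel with $I(X;Y)=\min(R_s,R)$) coincides exactly with the paper's.
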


\begin{proof} 
Achieving this payoff is easy. We only need to choose $p(y|x)$ such that $X$ and $Y$ are jointly Gaussian and $I(X;Y)\leq \min(R_s, R)$. We then choose $U$ to be Gaussian and independent of $X$ and $Y$. 

To show the converse, we first fix $p(x,y,u)$ to be jointly Gaussian and write $(\ref{paycurve})$ and $(\ref{mmse})$ in terms of the correlations $\rho_{xy}$, $\rho_{xu}$, and $\rho_{yu}$. It can be verified that maximizing the objective funtion in $(\ref{paycurve})$ is equivalent to maximizing $g(\rho_{xy}, \rho_{xu}, \rho_{yu})\triangleq \rho_{xy}^2-\rho_{xu}^2$. Ignoring the constraints on valid choice of $\rho_{xy}$, $\rho_{xu}$, $\rho_{yu}$ that are implied by any covariance matrix, we have two constraints from $(\ref{paycurve})$:
\begin{eqnarray}
 R_s&\geq& \frac{1}{2}\log \frac{(1-\rho_{xu}^2)(1-\rho_{yu}^2)}{1-\rho_{xy}^2-\rho_{xu}^2-\rho_{yu}^2+2\rho_{xy}\rho_{xu}\rho_{yu}}\label{r0constraint}\\
 R&\geq& \frac{1}{2}\log \frac{(1-\rho_{yu}^2)}{1-\rho_{xy}^2-\rho_{xu}^2-\rho_{yu}^2+2\rho_{xy}\rho_{xu}\rho_{yu}} \label{rconstraint}.
\end{eqnarray}
By removing constraint $(\ref{rconstraint})$, with some algebraic manipulation, it can be shown that $g(\rho_{xy}, \rho_{xu}, \rho_{yu})\leq 1-\exp(-2R_0)$. Similarly, by removing constriant $(\ref{r0constraint})$, it can be shown that $g(\rho_{xy}, \rho_{xu}, \rho_{yu})\leq 1-\exp(-2R)$. Hence, the objective function under all constraints needs to satisfy $g(\rho_{xy}, \rho_{xu}, \rho_{yu})\leq 1-\exp(-2\min(R_s, R))$.
\end{proof}
Theorem \ref{causal} implies that for a jointly Gaussian distribution $p(x,y,u)$, choosing the auxiliary random variable $U$ correlated with $X$ and $Y$ does not improve the payoff from an uncorrelated $U$. In this case, $U$ does not give out any information about $X$ and therefore the distortion between Alice and Eve is kept to a maximum as if under perfect secrecy. However, the rate-distortion tradeoff between Alice and Bob is limited by the secret key rate. Is it possible to achieve a higher payoff by another choice of $p(y,u|x)$? Below we show how a simple Gaussian quantization can provide a better solution that is independent of the key rate $R_s$ under certain conditions. 

\subsection{Gaussian Quantization} \label{sec-quan}
Let us consider the following construction. We first fix $X\sim p_0$. $X$ is quantized symmetrically about its mean with uniform intervals $T$ as shown in Fig. \ref{quantize} so that $Y\triangleq nT, n\triangleq\argmin_{k\in \mathbb{Z}}|kT-X|$, and $U=|Y|$. With this construction, $Y$ is a function of $X$ and $U$ is a function of $Y$. We denote the Markov relationship by $X\inout Y\inout U$. The reason for choosing such a symmetric quantization is that, to maintain a high distortion between Alice and Eve in $(\ref{mmse})$, we want to keep $\mathbb{E}[X|U=u]$ unbiased for all $u$. 

Then the two constraints in $(\ref{paycurve})$ become
\begin{eqnarray}
R_s&\geq& I(X;Y|U)=H(Y|U)=s \text{ bits}, s< 1\label{r_s}\\
R&\geq& I(X;U,Y)=H(Y). \label{r}
\end{eqnarray}

\begin{figure}[htbp]
  \centering
  \includegraphics[width=8 cm]{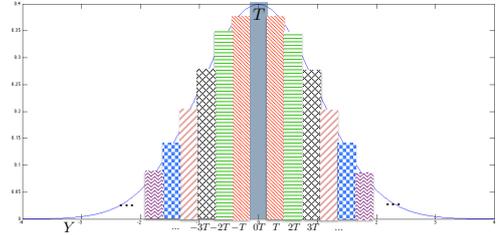}
\caption{Symmetric quantization of the Gaussian random variable $X$ with uniform interval $T$. }
\label{quantize}
\end{figure}

Here we apply the operational meaning of differential entropy from Theorem 9.3.1 of \cite{cover} to get 
\begin{eqnarray*}
H(Y)+\log{T} &\rightarrow& h(X),  \text{ as }T\rightarrow 0,
\end{eqnarray*}
where $h(X)$ denotes the differential entropy of $X$. Recall that the differential entropy of a Gaussian random variable is $h(X)=\frac{1}{2}\log(2\pi e\sigma_0^2)$. Therefore, for $R_s\geq1 \text{ bit}$, as $T\rightarrow 0$, a sufficient condition for $(\ref{r})$ is $T\geq\sqrt{2\pi e}\sigma_0\exp(-R)$. The distortion between Alice and Eve under this Gaussian quantization scheme can be asymptotically calculated as follows:
\begin{eqnarray*}
D^{\Delta}(R)&=&\int_{-\infty}^\infty\sum_y p(y|x)\frac{1}{\sqrt{2\pi}\sigma_0}e^{-\frac{x^2}{2\sigma_0^2}}(y-x)^2 dx\\
&=&\int_{-\infty}^\infty\frac{1}{\sqrt{2\pi}\sigma_0}e^{-\frac{x^2}{2\sigma_0^2}}(nT-x)^2dx\\
&\leq&\frac{\pi e}{2}\sigma_0^2 \exp(-2R)
\end{eqnarray*}
Summarizing the above analysis, we have 
$$\Pi_{p_0}(R,R_s)\geq 1-\frac{\pi e}{2}2^{-2R} \text{ for } R_s\geq 1\text{ bit} \text{ and } R\rightarrow \infty.$$
With Gaussian quantization, even though we sacrifice a constant factor on the distortion between Alice and Bob, the overall payoff is no longer governed by the key rate $R_s$ given that $R_s\geq1  \text{ bit}$. This illustrates that the jointly Gaussian distribution does not achieve optimal payoff. In Section \ref{simulation}, another choice of $U$ with the same quantizer will be discussed for arbitary $R$ and $R_s$ that can perform better than jointly Gaussian for low key rate $R_s$. Next, we will provide a construction for $Y$ and $U$ that achieves maximum payoff under certain conditions.

\subsection{Optimal Payoff for $R_s\geq 1$ bit}
\begin{thm} \label{thm-causal}
If the key rate $R_s\geq 1 \text{ bit}$, the optimal secrecy rate-payoff function for an i.i.d. Gaussian source and \textbf{causal source awareness} is given by
$$\Pi_{p_0}(R,R_s)=1-2^{-2R}.$$
\end{thm}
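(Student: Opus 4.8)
The plan is to establish matching converse and achievability bounds, both built on the decomposition $(\ref{mmse})$, which expresses the payoff as $\frac{1}{\sigma_0^2}(D_E - D_B)$, where $D_E = \mathbb{E}[(X - \mathbb{E}[X|U])^2]$ is Eve's MMSE distortion and $D_B = \mathbb{E}[(Y-X)^2]$ is Bob's distortion. The upper bound will turn out to be independent of $R_s$, with the role of the hypothesis $R_s \geq 1$ bit confined entirely to achievability.

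For the converse I would bound the two terms separately. Eve's term obeys $D_E \leq \sigma_0^2$, since the MMSE of estimating $X$ from any $U$ cannot exceed $\mathrm{Var}(X) = \sigma_0^2$ (attained when $U$ carries no information about $X$). Bob's term is governed by the rate constraint in $\mathcal{P}$: because $R \geq I(X;U,Y) \geq I(X;Y)$, the Gaussian rate-distortion converse yields $D_B \geq \sigma_0^2\, 2^{-2R}$. Substituting both estimates into $(\ref{mmse})$ gives $\Pi_{p_0}(R,R_s) \leq 1 - 2^{-2R}$ for every $R_s$.

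For achievability I would exhibit a single choice of $p(y,u|x)$ meeting both bounds simultaneously once $R_s \geq 1$ bit. Take $(X,Y)$ jointly Gaussian along the rate-distortion optimal test channel, so that $I(X;Y) = R$ and $D_B = \sigma_0^2\, 2^{-2R}$, and set the auxiliary to the folded reconstruction $U = |Y - \mu_0|$. Since $U$ is a deterministic function of $Y$ we have $X \inout Y \inout U$ and $I(X;U,Y) = I(X;Y) = R$, so the constraint $R \geq I(X;U,Y)$ holds with equality. Given $U = u$, the only residual uncertainty in $Y$ is the sign of $Y-\mu_0$, a single bit, whence $I(X;Y|U) \leq 1 \leq R_s$---this is precisely where the hypothesis $R_s \geq 1$ bit enters. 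Finally, conditioned on $U = u$ the value $Y - \mu_0$ equals $\pm u$ with equal probability by Gaussian symmetry, so the two linear conditional means $\mathbb{E}[X|Y = \mu_0 \pm u]$ average to $\mu_0$; hence $\mathbb{E}[X|U] = \mu_0$ and $D_E = \sigma_0^2$. Plugging $D_E = \sigma_0^2$ and $D_B = \sigma_0^2 2^{-2R}$ into $(\ref{mmse})$ yields the payoff $1 - 2^{-2R}$ exactly, matching the converse.

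I expect the main obstacle to be not the algebra but the rigor of the achievability construction, which relies on a continuous reconstruction $Y$ and a mixed continuous--discrete auxiliary $U = |Y - \mu_0|$. I would need to justify that $(\ref{paycurve})$--$(\ref{mmse})$ and the mutual-information quantities behave as claimed on continuous alphabets, in particular that $I(X;Y|U)$ collapses cleanly to the entropy of the hidden sign. The safest route is to realize the scheme as a limit of increasingly fine quantizers, as in Section \ref{sec-quan}, and to verify that the $\frac{\pi e}{2}$ loss incurred by uniform quantization disappears once the quantizer is replaced by the rate-distortion optimal Gaussian test channel, while the symmetric folding keeps $\mathbb{E}[X|U]$ exactly unbiased along the way so that $D_E \to \sigma_0^2$.
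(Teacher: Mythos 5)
Your proposal is correct and follows essentially the same route as the paper's proof: the converse decouples the two terms of $(\ref{mmse})$, bounding Eve's MMSE by $\sigma_0^2$ and Bob's distortion by the Gaussian distortion-rate function via $R \geq I(X;U,Y)\geq I(X;Y)$, and the achievability uses a jointly Gaussian $(X,Y)$ with the folded auxiliary $U=|Y|$ (the paper writes it with $V=\mathrm{sgn}(Y)$, noting $I(X;Y|U)=I(X;V|U)<1$ bit and $\mathbb{E}[X|U=u]=0$ by symmetry). Your closing concerns about continuous-alphabet rigor are reasonable caution, but the paper proceeds exactly as your main argument does, without the quantizer-limit detour.
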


\begin{proof}
To see the converse, observe that by relaxing the constraints in $(\ref{paycurve})$ and evaluating $(\ref{mmse})$, we have
\begin{eqnarray}
\max_{p(y,u|x)}\sum_{u,x}p(u|x)p_0(x)(x-\mathbb{E}[X|U=u])^2=\sigma_0^2&&\label{distortion_eve}\\
\min_{p(y,u|x): I(X;Y)\leq R}\sum_{x,y}p(y|x)p_0(x)(y-x)^2=\sigma_0^2 2^{-2R}&& \label{distortion_bob}
\end{eqnarray}
where $(\ref{distortion_bob})$ comes from the distortion rate function $d(R)$ of Gaussian source. To show the achievability, we choose $p(y,u|x)$ as follows. $Y$ is chosen such that $X$ and $Y$ are zero-mean jointly Gaussian. $U\triangleq|Y|$ and $V\triangleq sgn(Y)$, where $sgn(Y)$ is a binary variable indicating the sign of $Y$. Observe that $U,V$ together gives $Y$. By this construction, we have 
\begin{eqnarray}
I(X;Y,U)&=&I(X;Y)\nonumber\\
I(X;Y|U)&=&I(X;V|U) \ \ <\ \ 1 \text{ bit}.\label{ineq}
\end{eqnarray}
Therefore, given that $R_s\geq 1 \text{ bit}$, the constraint $R_s\geq I(X;Y|U)$ is automatically satisfied from $(\ref{ineq})$. In addition, $\mathbb{E}[X|U=u]=\frac12\mathbb{E}[X|Y=u]+\frac12\mathbb{E}[X|Y=-u]=0$ for all $u$ due to symmetry. The payoff achieves $1-2^{-2R}$.
\end{proof}

The optimization problem $(\ref{paycurve})$ for the case $R_s<1\text{ bit}$ involves the coupling of two terms in $(\ref{mmse})$. In the next section, we present a special case for this regime using a  Gaussian quantizer. 

For the causal general awareness where a strong eavesdropper has the causal information from all parties (Alice, Bob and Eve), the secrecy rate-payoff region was studied in \cite{allerton-cuff} and the result under a general payoff function and i.i.d. source distribution was given in Theorem 4.1 of \cite{allerton-cuff} as the following:
\begin{eqnarray}
\Pi(R,R_s)&=&  \max_{p(y,u,v|x)\in\mathcal{P}}\min_{z(u)} \mathbb{E}\pi(X,Y,z(U)) \label{payoff-gen}\\
\mathcal{P}&=&\left\{ 
     \begin{array}{lr}
       p(y,u,v|x):\\
       p(y|u,v,x)=p(y|u,v)\\
       R_s\geq I(X,Y;V|U)\\
       R\geq I(X;U,V)
     \end{array}
   \right\}. \nonumber
\end{eqnarray}
It turns out that Theorem \ref{thm-causal} also holds for the causal general awareness case. Because of the Markov relation $X\inout (U,V) \inout Y$, we have $R\geq I(X;U,V)\geq I(X;Y)$, which gives us the same result as in Theorem \ref{thm-causal}. With the same choice of $Y$, $U$, and $V$, $I(X,Y;V|U)=1 \text{ bit}$ constantly.
Concluding this section,  we have shown that $R_s=1 \text{ bit}$ is the maximum key rate needed even in the case of a strong eavesdropper who is aware of the past moves of all parties.

\section{Gaussian Quantizer Special Case}
In this section, we will address a special structure of the system for the case of causal source awareness. A symbol-by-symbol quantization of the source sequence is performed before the transmission and the legitimate receiver Bob in this system happens to reproduce the scalar quantization of the source at the decoder. That is, we are restricting to a subset $\mathcal{B}_n$ of all valid encoder and decoder pairs $(f_n$, $\{g_i\}_{i=1}^n)$.

Let $\hat{X}_i\sim\hat{p}_0$ be the conditional mean of a uniform quantization of $X_i$ as in Section \ref{sec-quan}, i.e. $\hat{X}_i=\mathbb{E}[X_i|\text{Quantization bin of } X_i]$, and $\hat{M}$ be the encoded message to be transmitted. Let the optimal payoff function under such restriction be $\Pi_{p_0}^{\Delta}(R,R_s)$. The following lemma indicates that revealing the causal realization of the original source is equivalent to revealing the causal realization of the quantized version of the source in Eve's estimate. 
\begin{lem} \label{lemma1}
${X}_i\inout (\hat{M},\hat{X}^{i-1})\inout X^{i-1}$ for all $i=1,...,n$.
\end{lem}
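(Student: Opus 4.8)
The plan is to exploit the two structural features of the restricted class $\mathcal{B}_n$: the source is i.i.d.\ and the quantizer acts symbol-by-symbol, so each $\hat{X}_j$ is a deterministic function of $X_j$ alone and the pairs $\{(X_j,\hat{X}_j)\}_{j=1}^n$ are mutually independent; moreover, since quantization precedes transmission, $\hat{M}$ is a (possibly stochastic) function of $(\hat{X}^n,K)$ with the key $K$ independent of $X^n$. The one fact I would isolate is that the fine-grained detail of the past, i.e.\ the location of $X^{i-1}$ inside its quantization bins, never reaches the channel: $\hat{M}$ sees the first $i-1$ coordinates only through $\hat{X}^{i-1}$.

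First I would establish the central conditional independence
$$X^{i-1}\ \inout\ \hat{X}^{i-1}\ \inout\ (X_i,\hat{M}).$$
To see it, condition on $\hat{X}^{i-1}=\hat{x}^{i-1}$. Because $X^n$ is i.i.d.\ and $K$ (together with any encoder randomization) is independent of $X^n$, we have $X^{i-1}\perp (X_i^n,K)$ unconditionally, and since $\hat{X}^{i-1}$ is a function of $X^{i-1}$ this upgrades to $X^{i-1}\perp (X_i^n,K)\mid \hat{X}^{i-1}$. Now $\hat{M}=f(\hat{x}^{i-1},\hat{X}_i,\dots,\hat{X}_n,K)$ depends on the first $i-1$ coordinates only through the fixed $\hat{x}^{i-1}$, so given $\hat{x}^{i-1}$ the pair $(X_i,\hat{M})$ is a function of $(X_i^n,K)$ alone, which is independent of $X^{i-1}$ given $\hat{X}^{i-1}$. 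The displayed chain follows.

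Then I would deduce the lemma from this one relation. Using the chain to factor the conditional density,
$$p(x^{i-1},x_i\mid \hat{x}^{i-1},\hat{m})=p(x^{i-1}\mid \hat{x}^{i-1})\,p(x_i\mid \hat{x}^{i-1},\hat{m}),$$
while the same chain gives $X^{i-1}\inout\hat{X}^{i-1}\inout\hat{M}$, hence $p(x^{i-1}\mid \hat{x}^{i-1})=p(x^{i-1}\mid \hat{x}^{i-1},\hat{m})$; substituting yields the product form $p(x^{i-1}\mid \hat{x}^{i-1},\hat{m})\,p(x_i\mid \hat{x}^{i-1},\hat{m})$, which is exactly the asserted $X_i\inout(\hat{M},\hat{X}^{i-1})\inout X^{i-1}$.

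The step I expect to be the main obstacle is making the central independence airtight, precisely because $\hat{X}_i$ is deliberately excluded from the conditioning set. One cannot argue that the present symbol is pinned down by the conditioned quantities, and indeed $\hat{M}$ does carry information about $X_i$ through $\hat{X}_i$; the point to get right is that any correlation $\hat{M}$ induces links the present $X_i$ to the message and is never routed through the fine detail of the past, which $\hat{X}^{i-1}$ screens off. A clean way to certify this is the Bayesian-network picture: the only node joining distinct time indices is $\hat{M}$, and every path from $X_i$ to $X^{i-1}$ must traverse some $\hat{X}_j$ with $j<i$ as a non-collider, so conditioning on $\hat{X}^{i-1}$ blocks all such paths and $X_i$ is d-separated from $X^{i-1}$ given $(\hat{M},\hat{X}^{i-1})$.
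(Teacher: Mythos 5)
Your proof is correct, and it rests on the same structural facts as the paper's, but it is organized around a different Markov chain. The paper's entire justification is one sentence: the lemma ``can be verified by observing that $X^i \inout \hat{X}^i \inout \hat{M}$,'' i.e., it conditions on the full quantized prefix $\hat{X}^i$, \emph{including} the current symbol; to pass from that chain to the asserted $X_i \inout (\hat{M},\hat{X}^{i-1}) \inout X^{i-1}$ one must still invoke the i.i.d.\ factorization $p(x^i)=p(x^{i-1})p(x_i)$, a step the paper leaves implicit. You instead prove $X^{i-1} \inout \hat{X}^{i-1} \inout (X_i,\hat{M})$, a chain that already folds the i.i.d.\ structure and the independence of $(K,\text{encoder randomness})$ from $X^n$ into the Markov statement itself, and from which the lemma follows purely by the weak-union property of conditional independence (your two-line density factorization is exactly that property, using the weaker chain $X^{i-1}\inout\hat{X}^{i-1}\inout\hat{M}$ obtained by dropping $X_i$). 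What your route buys is a self-contained, airtight argument whose only inputs are the three modeling assumptions of the restricted class $\mathcal{B}_n$ (per-symbol quantizer, i.i.d.\ source, $\hat{M}$ a stochastic function of $(\hat{X}^n,K)$), with the subtle point --- that $\hat{M}$ genuinely correlates $X_i$ with the message but never routes information through the fine detail of the past --- handled explicitly; what the paper's route buys is brevity and reuse, since the single observation $X^i \inout \hat{X}^i \inout \hat{M}$ is also the natural starting point for Lemma \ref{lemma2}. Your closing d-separation argument is a valid alternative certificate of the same chain, not a separate gap-filler.
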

This can be verified by observing that $X^i\inout \hat{X}^i\inout \hat{M}$. Therefore, we can alternatively analyze the performance of the system in Fig. \ref{alternative}. This model is defined formally as the following.
\begin{defn} \label{definition2}
The rate-distortion triple $(R,R_s, D)$ is achievable if 
$$\mathbb{P}[\hat{Y}^n\neq\hat{X}^n]\rightarrow 0 \text{ as } n\rightarrow \infty, \text{ and}$$
$$\lim_{n\rightarrow \infty}\sup_{\{\hat{f}_n, \{\hat{g}_i\}_{i=1}^n\}}\inf_{\{\hat{t}_i(\hat{m},\hat{s}^{E}_i)\}_{i=1}^n}\mathbb{E}\bigg{[}\frac1n \sum_{i=1}^n (\hat{Z}_i-\hat{X}_i)^2\bigg{]}\geq D.$$
\end{defn}

\begin{lem} \label{lemma2}
$X_i\inout\hat{X}_i\inout (\hat{M},\hat{X}^{i-1})$ for all $i=1,...,n$.
\end{lem}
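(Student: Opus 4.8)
The plan is to read the claim off directly from the factorization of the joint law induced by the restricted model of Fig.~\ref{alternative}, by establishing the slightly stronger fact that, conditioned on $\hat{X}_i$, the symbol $X_i$ is independent of \emph{all} other variables in the system. First I would write $p(x^n,\hat{x}^n,k,\hat{m})$ explicitly. Three structural facts pin down this factorization: the source is i.i.d., so $p(x^n)=\prod_{j=1}^n p_0(x_j)$; the quantizer is memoryless, so each $\hat{X}_j=q(X_j)$ is a deterministic function of $X_j$ alone (the map $q$ sending $x$ to the conditional-mean reconstruction of the bin containing it), contributing an indicator $\mathbf{1}[\hat{x}_j=q(x_j)]$; and, by the defining property of $\mathcal{B}_n$ (equivalently $X^n\inout\hat{X}^n\inout\hat{M}$), the encoder sees the source only through its quantization, so $\hat{M}$ is drawn from a kernel $p(\hat{m}\mid\hat{x}^n,k)$ with $K$ independent of $X^n$. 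Collecting these gives
\begin{equation*}
p(x^n,\hat{x}^n,k,\hat{m})=\Big[\prod_{j=1}^n p_0(x_j)\,\mathbf{1}[\hat{x}_j=q(x_j)]\Big]\,p(k)\,p(\hat{m}\mid\hat{x}^n,k).
\end{equation*}

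The key step is then a one-line inspection. Writing $x^{\setminus i}\triangleq(x_j)_{j\neq i}$ and holding every variable except $x_i$ fixed, the only factor above that depends on $x_i$ is $p_0(x_i)\,\mathbf{1}[\hat{x}_i=q(x_i)]$; the remaining factors (the other source symbols, the key, and the message kernel $p(\hat{m}\mid\hat{x}^n,k)$) are constant in $x_i$. Hence $p(x_i\mid x^{\setminus i},\hat{x}^n,k,\hat{m})\propto p_0(x_i)\,\mathbf{1}[\hat{x}_i=q(x_i)]$, i.e. the conditional law of $X_i$ depends on the conditioning variables only through $\hat{x}_i$ (it is just the Gaussian truncated to the bin indexed by $\hat{x}_i$). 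This shows that $X_i$ is conditionally independent of all remaining variables given $\hat{X}_i$; restricting attention to the sub-collection $(\hat{M},\hat{X}^{i-1})$ yields exactly $X_i\inout\hat{X}_i\inout(\hat{M},\hat{X}^{i-1})$.

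The one place needing care — and what I expect to be the only real obstacle — is the temptation to claim that $X_i$ is simply independent of $\hat{M}$: it is not, since $\hat{M}$ depends on $X_i$ through $\hat{X}_i=q(X_i)$. The point is that this is the \emph{only} channel of dependence, so conditioning on $\hat{X}_i$ severs it, and the memorylessness of the quantizer is essential here, since it guarantees that $\hat{X}_j$ for $j\neq i$ carries no information about $X_i$. If one prefers a measure-theoretic argument over the density-level one, the same conclusion follows by noting that $X_i$ is independent of $(X^{i-1},X_{i+1}^n,K)$ (i.i.d.\ source plus an independent key), that $(\hat{M},\hat{X}^{i-1})$ is a function of $(\hat{X}_i,X^{i-1},X_{i+1}^n,K)$, and that a deterministic $\hat{X}_i=q(X_i)$ lifts the unconditional independence $X_i\perp(X^{i-1},X_{i+1}^n,K)$ to $X_i\perp(X^{i-1},X_{i+1}^n,K)\mid\hat{X}_i$.
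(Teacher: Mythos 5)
Your proof is correct: the factorization of the joint law induced by the restricted model --- i.i.d.\ source, memoryless quantizer $\hat{X}_j=q(X_j)$, and an encoder acting only on $(\hat{X}^n,K)$ with $K$ independent of $X^n$ --- does give the stronger statement that $X_i$ is conditionally independent of all remaining system variables given $\hat{X}_i$, and both your density-level inspection and the independence-lifting argument at the end are sound. The paper states Lemma~\ref{lemma2} without any proof, treating it as evident from the same structural facts it invokes for Lemma~\ref{lemma1} (namely $X^i\inout\hat{X}^i\inout\hat{M}$), so your argument is exactly the verification the paper leaves implicit rather than a genuinely different route.
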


Now by applying Theorem 4.1 from \cite{globecom-cuff}, we have that $(R,R_s,D)$ is achievable iff  $D\leq D_{\hat{p}_0}(R,R_s)\triangleq\max_{p(\hat{u}|\hat{x})\in\mathcal{Q}}\min_{\hat{z}(\hat{u})}\mathbb{E}[(\hat{z}(\hat{U})-\hat{X})^2]$, where $\hat{X}\sim \hat{p}_0$ and $\mathcal{Q}=\{p(\hat{u}|\hat{x}): R\geq H(\hat{X}), R_s\geq H(\hat{X}|\hat{U})\}$. If we fix $(R, R_s)$ and suppose $p(\hat{u}|\hat{x})\in \mathcal{Q}$ is the corresponding distribution that achieves $D_{\hat{p}_0}(R,R_s)$, $\mathcal{T}_n\triangleq \{t_i(\hat{m},s^{E}_i)\}_{i=1}^n$, $\hat{\mathcal{T}}_n\triangleq \{\hat{t}_i(\hat{m},\hat{s}^{E}_i)\}_{i=1}^n$, then we have the following inequalities:
\begin{eqnarray}
&&\sigma_0^2 \Pi_{p_0}^{\Delta}(R,R_s) \nonumber\\
&\overset{(a)}=&\lim_{n\rightarrow \infty}\sup_{\mathcal{B}_n}\inf_{\mathcal{T}_n}\frac{1}{n}\sum_{i=1}^n\mathbb{E}[(Z_i-X_i)^2]-\mathbb{E}[(X_i-\hat{Y}_i)^2]\nonumber\\
&\overset{(b)}=&\lim_{n\rightarrow \infty}\sup_{\mathcal{B}_n}\inf_{\hat{\mathcal{T}}_n}\frac{1}{n}\sum_{i=1}^n\mathbb{E}[(\hat{Z}_i-X_i)^2]-\mathbb{E}[(X_i-\hat{Y}_i)^2]\nonumber\\
&\overset{(c)}=&\lim_{n\rightarrow \infty}\sup_{\mathcal{B}_n}\inf_{\hat{\mathcal{T}}_n}\frac{1}{n}\sum_{i=1}^n\mathbb{E}[(\hat{Z}_i-\hat{X}_i)^2]+\mathbb{E}[(\hat{X}_i-X_i)^2]\nonumber\\
&&+2\mathbb{E}\bigg{[}\mathbb{E}[(\hat{Z}_i-\hat{X}_i)(\hat{X}_i-X_i)|\hat{X}_i]\bigg{]}-\mathbb{E}[(X_i-\hat{Y}_i)^2]\nonumber\\
&\overset{(d)}=&\lim_{n\rightarrow \infty}\sup_{\mathcal{B}_n}\inf_{\hat{\mathcal{T}}_n}\frac{1}{n}\sum_{i=1}^n\mathbb{E}[(\hat{Z}_i-\hat{X}_i)^2]\nonumber\\
&&+\frac1n \sum_{i=1}^n\mathbb{E}[(\hat{X}_i-X_i)^2] -\mathbb{E}[(X_i-\hat{Y}_i)^2]\nonumber\\
&\overset{(e)}=&D_{\hat{p}_0}(R,R_s)
\end{eqnarray}
Here, (a) follows by definition of $\Pi^{\Delta}(R,R_s)$; (b) follows from Lemma \ref{lemma1}; (c) follows from law of total expectation; (d) follows from Lemma \ref{lemma2} since $\hat{Z}_i$ is a function of $(\hat{M},\hat{X}^{i-1})$; and (e) follows by Definition \ref{definition2}. Summarizing the analysis in this section, we have the following theorem.
\begin{thm} \label{theorem4}
$\Pi_{p_0}^{\Delta}(R,R_s) = \frac{1}{\sigma_0^2}D_{\hat{p}_0}(R,R_s)$.
\end{thm}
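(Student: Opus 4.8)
The plan is to convert the sup--inf that defines $\sigma_0^2\,\Pi_{p_0}^{\Delta}(R,R_s)$ into the sup--inf that defines $D_{\hat p_0}(R,R_s)$ through a chain of equalities, each justified by one of the two Markov lemmas or by the centroid property of the quantizer. First I would unpack the definition of the restricted class $\mathcal{B}_n$: since Bob reconstructs the quantized source with vanishing error probability, we may set $\hat Y_i=\hat X_i$, so the payoff collapses to $\frac1n\sum_i \mathbb{E}[(Z_i-X_i)^2]-\mathbb{E}[(X_i-\hat X_i)^2]$, yielding the first equality from the definition of the payoff $\pi$.

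Next I would eliminate the dependence of Eve's optimal estimate on her true causal side information. By Lemma \ref{lemma1}, $X_i\inout(\hat M,\hat X^{i-1})\inout X^{i-1}$, so revealing the original causal symbols $X^{i-1}$ gives Eve nothing about the current target $X_i$ beyond what $\hat X^{i-1}$ already supplies; the inner infimum may therefore be restricted to estimators $\hat Z_i$ that are functions of $(\hat M,\hat X^{i-1})$. The core algebraic move is then to write $\hat Z_i-X_i=(\hat Z_i-\hat X_i)+(\hat X_i-X_i)$ and expand the square, which splits the objective into the ``clean'' term $\mathbb{E}[(\hat Z_i-\hat X_i)^2]$ matching the $D_{\hat p_0}$ objective, the quantization-error term $\mathbb{E}[(\hat X_i-X_i)^2]$, and a cross term.

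The decisive step is showing that the cross term vanishes. Conditioning on $\hat X_i$ and invoking Lemma \ref{lemma2}, namely $X_i\inout\hat X_i\inout(\hat M,\hat X^{i-1})$, makes $\hat Z_i$ conditionally independent of $X_i$ given $\hat X_i$, so the cross expectation factors into $\mathbb{E}[\hat Z_i-\hat X_i\mid\hat X_i]$ times $\mathbb{E}[\hat X_i-X_i\mid\hat X_i]$; and because the quantizer is chosen so that $\hat X_i=\mathbb{E}[X_i\mid\text{bin of }X_i]$, we have $\mathbb{E}[X_i\mid\hat X_i]=\hat X_i$ and the second factor is zero. I expect this to be the main obstacle: it is the only place where both Markov lemmas and the centroid (unbiasedness within each bin) property of the quantizer are invoked at once, and the entire reduction to the lossless problem hinges on it---with a non-centroid quantizer the cross term would survive and the identity would fail.

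Finally, the surviving quantization-error term $\frac1n\sum_i\mathbb{E}[(\hat X_i-X_i)^2]$ cancels exactly against $-\mathbb{E}[(X_i-\hat Y_i)^2]=-\mathbb{E}[(X_i-\hat X_i)^2]$ (using again $\hat Y_i=\hat X_i$), leaving $\lim_n\sup_{\mathcal{B}_n}\inf_{\hat{\mathcal T}_n}\frac1n\sum_i\mathbb{E}[(\hat Z_i-\hat X_i)^2]$, which is precisely the sup--inf appearing in Definition \ref{definition2} and equals $D_{\hat p_0}(R,R_s)$ by Theorem 4.1 of \cite{globecom-cuff}. Dividing through by $\sigma_0^2$ gives the stated identity. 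Because every link in the chain is an equality that preserves the sup--inf structure, the achievability and converse directions are obtained together, so no separate two-sided argument is required.
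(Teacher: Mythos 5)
Your proposal is correct and takes essentially the same route as the paper's own proof: the identical chain of equalities (restrict Eve's estimators to $\hat{\mathcal{T}}_n$ via Lemma \ref{lemma1}, expand $(\hat{Z}_i-X_i)^2$ around $\hat{X}_i$, kill the cross term using Lemma \ref{lemma2} together with the centroid property of the quantizer, and cancel the quantization-error term against Bob's distortion to arrive at Definition \ref{definition2} and hence $D_{\hat{p}_0}(R,R_s)$). If anything, you make explicit the unbiasedness step $\mathbb{E}[X_i\mid\hat{X}_i]=\hat{X}_i$ that the paper's justification of its step (d) leaves implicit.
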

$D_{\hat{p}_0}(R,R_s)$ can be calculated as a linear program (LP). The technical details of this LP can be found in \cite{globecom-cuff}.

\begin{figure}[htbp]
  \centering
  \includegraphics[width=7.5 cm]{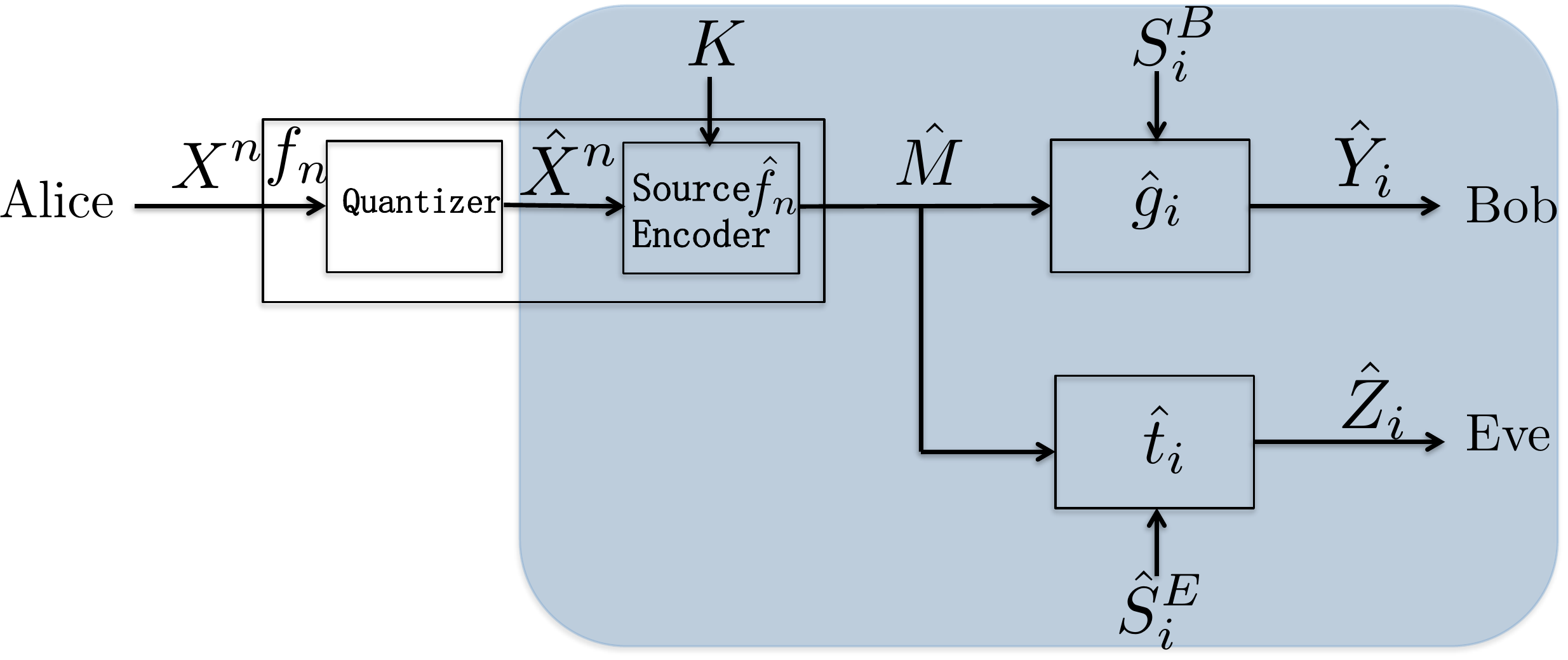}
\caption{An alternative model in which quantization is performed before lossless compression. }
\label{alternative}
\end{figure}

\section{Numerical Computation Result} \label{simulation}
In this section, we numerically compare the payoffs under different schemes for the scenario of causal source awareness. Since the optimization for the causal general awareness case involves two auxiliary random variables and displays more complexity, we numerically analyze only the causal source awareness case as an example. 

We compare the Gaussian quantization scheme with the jointly Gaussian scheme. Even though in Section \ref{sec-quan}, we gave only an analytical lower bound on the payoff as a function of the rates as $R\rightarrow \infty$, here we propose a numerical scheme that can evaluate the achievable secrecy rate-payoff for arbitary $R$ and $R_s$. The choice of the random variable $Y$ is the same as in Section \ref{sec-quan} and $U\triangleq n \mod N$, where $N$ is some positive integer. Intuitively, $U$ is a coarser quantizer of $X$. Here we greedily obtain an achievable lower bound by sequentially solving for the optimal $T$ that satisfies $R\geq I(X;U,Y)$ and the optimal $N$ that satisfies $R_s\geq I(X;Y|U)$. The payoff of the optimal scheme for $R_s\geq 1$ bit is also computed for low $R_s$ for comparison. These results are shown in Fig. \ref{quant-res}, which shows that the Gaussian quantization choice outperforms the jointly Gaussian choice in the payoff as a function of $R_s$ for a fixed $R$. The quantization upper bound is numerically obtained by solving the LP. Note that even though Theorem \ref{theorem4} gives a tight bound, the implementation of LP requires that the eavesdropper's reconstruction fall in the same quantization alphabet. 
\begin{figure}[htbp]
	\centering
	\includegraphics[width=8.5 cm]{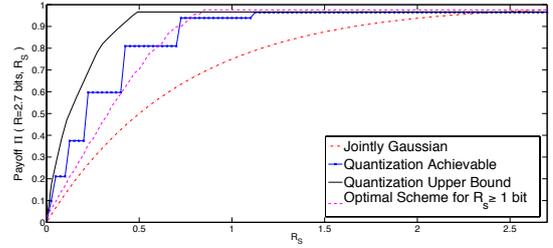}
	\caption{Payoff as a function of key rate $R_s$ for fixed $R=2.7 \text{ bits}$}
	\label{quant-res}
\end{figure}

\section{Conclusion}
We have applied the secrecy source coding problem introduced in \cite{allerton-cuff} for the important case of Gaussian sources. The payoff function has been defined in a way such that the reconstruction qualities (measured by mean squared error) of the source from Bob and from Eve are compared. Three scenarios depending on the SI to Eve (from weak to strong) have been investigated. Our results show 1) for a weak Eve who has no causal information about the source, maximum distortion between Alice and Eve can be achieved with no loss of quality in the communication between Alice and Bob with any positive key rate; and 2) for an Eve that has causal information of Alice (and Bob), at most $1$ bit of secret key is needed for each Gaussian source symbol to force maximum distortion to Eve while keeping the rate-distortion tradeoff at the same level as in point-to-point communication. 
This implies that even if a very high rate is demanded by Bob to have a good quality reconstruction of the source, no more secret key is needed, because the ``higher precision" bits do not leak out any
information helpful to Eve. When the key rate is less than $1$ bit (and $R_s < R$), the optimization problem cannot be decoupled and the constraint region is rather complicated. Counter-intuitively, choosing the auxiliary random variable
to be jointly Gaussian is not optimal. This problem is still left open from this work. However, the example given in the numerical results section indicates there will be a tradeoff between secrecy and reliability.

\section*{Acknowledgment}
This work was supported by NSF Grants CCF-1116013 and CNS-09-05086, and AFOSR Grant FA9550-12-1-0196. The authors would like to thank Curt Schieler for providing the code for linear programming.


\begin{thebibliography}{1}
\bibitem{allerton-cuff}
  P. Cuff, ``Using a secret key to foil an eavesdropper,''
  \emph{Proc. 48th Annual Conference on Communication, Control and Computing}, 
  Sept. 2010. 
\bibitem{globecom-cuff}
P. Cuff, ``A framework for partial secrecy," \emph{Proc. IEEE Global Communications Conference}, Dec. 2010.
\bibitem{schieler-isit}
C. Schieler and P. Cuff, ``Secrecy is cheap if the adversary must reconstruct," \emph{ISIT}, Jul. 2012.
\bibitem{shannon-rd}
C. E. Shannon, ``Coding theorems for a discrete source with a fidelity criterion," IRE Int. Conv. Rec., vol. 7, pp. 142-163, Mar. 1959.
\bibitem{wyner}
A. D. Wyner, ``The wire-tap channel," Bell System Tech. J., vol. 54, no. 8, pp. 1355-1387, 1975.
\bibitem{ck}
I. Csisz\'{a}r and J. K\"{o}rner, ``Broadcast channels with confidential messages," IEEE Trans. Inf. Theory, vol. 24, no 3, pp. 390-348, 1978.
\bibitem{shannon-secrecy}
C. E. Shannon, ``Communication theory of secrecy systems," Bell System Tech. J., vol. 28, pp. 656-715, 1949.
\bibitem{yamamoto}
H. Yamamoto, ``Rate-distortion theory for shannon cipher systems," IEEE Trans. on Inf. Theory, vol. 43, no. 3, pp. 827-835, May 1997.
\bibitem{allerton-schieler}
C. Schieler, E. C. Song, P. Cuff, and H. V. Poor, ``Source-channel secrecy with causal disclosure,"
 \emph{Proc. 50th Annual Conference on Communication, Control and Computing}, Oct. 2012.
 \bibitem{cuff-ita}
 P. Cuff, ``Optimal equivocation in secrecy systems a special case of distortion-based characterization," \emph{ITA}, 2013.
 \bibitem{schieler-journal}
 C. Schieler and P. Cuff, ``Rate-distortion theory for secrecy systems," submitted to IEEE Trans. on Inf. Theory. http://arxiv.org/abs/1305.3905
\bibitem{cover}
T. M. Cover and J. A. Thomas, Elements of Information Theory, 1st ed.  John Wiley \& Sons, 1991.
\end{thebibliography}

\end{document}